\documentclass[letterpaper, 10 pt, conference]{ieeeconf}
\IEEEoverridecommandlockouts

\usepackage{amssymb,amsmath}

\usepackage{amsthm}
\usepackage{mathtools}
\usepackage{multirow}
\usepackage{graphicx}
\usepackage{comment}
\usepackage[sort,compress,noadjust]{cite}
\usepackage[font=footnotesize]{subcaption}
\usepackage[font=footnotesize]{caption}

\usepackage{amsfonts}
 \usepackage{tabularx}
\usepackage{graphicx}
\usepackage{array}
\usepackage{float}
\usepackage{hhline}
\usepackage{algorithmic}
\usepackage[linesnumbered, ruled, vlined]{algorithm2e}
\usepackage{setspace}
\usepackage{colortbl}
\usepackage{arydshln}
\setlength\dashlinedash{0.5pt}
\setlength\dashlinegap{1.5pt}
\setlength\arrayrulewidth{0.5pt}

\SetCommentSty{mycommfont}

\usepackage{accents}

\usepackage{placeins}

\theoremstyle{plain}
\newtheorem{theorem}{Theorem}
\theoremstyle{plain}
\newtheorem{proposition}{Proposition}
\theoremstyle{plain}
\newtheorem{lemma}{Lemma}
\theoremstyle{plain}
\newtheorem{corollary}{Corollary}
\theoremstyle{definition}

\theoremstyle{remark}

\newtheoremstyle{definition}{}{}{}{}{\bfseries}{.}{.5em}{\thmname{#1}\thmnumber{ #2}\thmnote{ (#3)}}
\theoremstyle{definition}

\usepackage[colorlinks,urlcolor=blue]{hyperref}
\usepackage[capitalize]{cleveref}
\crefformat{equation}{(#2#1#3)}
\Crefformat{equation}{Equation~(#2#1#3)}
\Crefname{equation}{Equation}{Eqs.}

\title{\LARGE\bf Analysis of the Geometric Heat Flow Equation: \\ 
Computing Geodesics in Real-Time with Convergence Guarantees}

\author{Samuel G. Gessow and Brett T. Lopez%
\thanks{
All authors are with VECTR Laboratory, University of California, Los Angeles, Los Angeles, CA, USA, {\tt\footnotesize \{sgessow,  btlopez\}@ucla.edu}}
}

\begin{document}
\maketitle
\thispagestyle{empty}


\begin{abstract}
    We present an analysis on the convergence properties of the so-called geometric heat flow equation for computing geodesics (extremal curves) on Riemannian manifolds.
    Computing geodesics numerically in real time has become an important capability across several fields, including control and motion planning. 
    The geometric heat flow equation involves solving a parabolic partial differential equation whose solution is a geodesic.
    In practice, solving this PDE numerically can be done efficiently, and tends to be more numerically stable and exhibit a better rate of convergence compared to numerical optimization. 
    We prove that the geometric heat flow equation is exponentially stable in $L_2$ if the curvature of the Riemannian manifold does not exceed a positive bound and that asymptotic convergence in $L_2$ is always guaranteed.
    We also present a pseudospectral method that leverages Chebyshev polynomials to accurately compute geodesics in only a few milliseconds for non-contrived manifolds.
    Our analysis was verified with our custom pseudospectral method by computing geodesics on common non-Euclidean surfaces, and in feedback for a contraction-based controller with a non-flat metric for a nonlinear system. 
\end{abstract}

\section{Introduction}
\label{sec:introduction}

Finding the shortest path between two points on a non-Euclidean manifold has become an important aspect in control, motion planning, computer graphics, and various other fields. 
Computing the shortest path (or curve) on a Riemannian manifold, i.e., a smooth manifold equipped with a smooth spatially varying inner product, is formally stated as finding the extremal curves of the arc length functional, and these extremal curves are called geodesics \cite{do1992riemannian}.
Various methods have been proposed for computing geodesics numerically depending on how the Riemannian manifold is represented (continuous or discrete) and whether the shortest point-to-point or point-to-many path is desired.
Two common methods for computing point-to-point geodesics on continuous manifolds are gradient descent \cite{leung2017nonlinear,manchester2017control} and the geometric heat flow method \cite{jost2005riemannian,belabbas2017new,liu2019affine}. 
For gradient descent, the problem is formulated as a two-point boundary value problem where the Riemannian energy functional is minimized. 
Using a set of basis functions to represent the solution, e.g., Chebyshev polynomials, the optimization can be posed over the finite-dimensional space of unknown coefficients.
This method is often used with contraction-based feedback controllers \cite{leung2017nonlinear,manchester2017control,lopez2020adaptive,singh2023robust}, but the computational demand and no provable convergence rate guarantee has motivated researchers to explore other methods, e.g., see \cite{wang2020continuous,tsukamoto2021contraction}.

An alternative approach relies on solving a parabolic partial differential equation that deforms a curve with fixed boundary conditions until it becomes an extremal curve {for a given homotopy class}. 
The parabolic PDE is known as the \emph{geometric heat flow equation} \cite{jost2005riemannian} because it resembles the heat equation from thermodynamics. 
The key advantage of the geometric heat flow method over gradient descent is that solving the PDE can be done very efficiently by recasting the PDE as an initial value problem for a coupled system of ordinary differential equations.
In other words, a solution to the PDE can be obtained simply by numerically integrating a coupled system of ODEs given initial conditions until a convergence criterion is met. 
This is formally known as the method of lines \cite{schiesser2012numerical}, and has been shown to be effective at computing extremal trajectories for non-holonomic \cite{belabbas2017new} and non-affine \cite{liu2019affine}.
For high dimensional systems, the method of lines can be combined with pseudospectral methods to improve scalability and solution speed \cite{adu2025bring}.
Despite promising empirical results concerning numerical stability and convergence rate, a comprehensive analysis of the geometric heat flow method has not yet been conducted.  

The main contribution of this paper is the stability analysis of the geometric heat flow equation.
We show that any initial curve will exponentially converge in $L_2$ to a geodesic if the curvature of the Riemannian manifold does not exceed a certain positive value (which we specify).
Our approach mirrors the derivation of the so-called Jacobi field in Riemannian geometry \cite{do1992riemannian}, but we instead obtain a parabolic PDE with a source term related to the Riemannian curvature.
We then employ PDE stability theory \cite{krstic2008boundary} to prove exponential stability in $L_2$ depending on the Riemannian curvature; asymptotic stability in $L_2$ always holds. 
In other words, any curve will converge exponentially (or asymptotically depending on the curvature) in $L_2$ to a geodesic by numerically solving the geometric heat flow PDE.
Our analysis is verified numerically with a custom pseudospectral method that efficiently computes geodesics within a few milliseconds for non-contrived manifolds. 
Numerical tests were conducted by computing geodesics on classical 2D surfaces (sphere, torus, and egg box), and the method was used in feedback with a contraction-based controller and a non-flat metric to stabilize a third-order nonlinear system \cite{manchester2017control}.

\textit{Notation:} 
The set of positive and strictly-positive scalars is $\mathbb{R}_{+}$ and $\mathbb{R}_{>0}$.
The set of real symmetric matrices is $\mathcal{S}$, and the set of symmetric positive definite matrices is $\mathcal{S}_+$.
The partial derivative of a multivariable function is $\partial_s c(s,\tau) = \partial c / \partial s$.
The general inner product for two vectors $u,w $ is $\langle u, w\rangle_g = \sum_{i,j} g_{ij} u_i w_j$, where the subscript $g$ is omitted when the inner product is the Euclidean norm.
The $L_2$ norm of a function $f$ is $\|f(s)\|^2_{L_2[a,b]} = \int_a^b \|f(s)\|^2 ds$.

\section{Riemannian Geometry Preliminaries}
Let $\mathcal{M}$ be a smooth manifold and let the tangent space at a point $p \in \mathcal{M}$, denoted as $T_p\mathcal{M}$, be the set of all tangent vectors at $p$.
If a smooth manifold is equipped with a smoothly varying inner product on the tangent space $T_p \mathcal{M}$, i.e., there exists $g_p : T_p \mathcal{M} \times T_p \mathcal{M} \rightarrow \mathbb{R} : (u,w) \mapsto \langle u, w \rangle_{g(p)} = \sum_{i,j} g_{ij}(p) u_i w_j$, then $(\mathcal{M},g)$ is a Riemannian manifold and $g$ is the Riemannian metric. 
In some works, the Riemannian metric is represented as a tensor $G : \mathcal{M} \rightarrow \mathcal{S}_+$ where the $i,j$ element of $G$ is $g_{ij}$.
The Riemannian metric $g$ defines local geometric notions such as angles, length, and orthogonality at every point ${p} \in \mathcal{M}$.

Let ${p},\,q \in \mathcal{M}$ and ${c}:[0,1] \rightarrow \mathcal{M} : s \mapsto c(s)$ be a regular (i.e., $\partial_s c\neq 0 ~  \forall s \in [0,1]$) parametrized differentiable curve such that ${c}(0)={p}$ and ${c}(1)=q$.
The length $\mathcal{L}$ and Riemannian energy $\mathcal{E}$ functionals of curve ${c}(s)$ are
\begin{equation*}
    \begin{aligned}
        \mathcal{L}(c(s)) & =\int_0^1 \sqrt{\langle \partial_s c, \partial_s c \rangle}_{g(c)}ds \\ 
        \mathcal{E}(c(s)) &= \frac{1}{2} \int_0^1{\langle \partial_s c, \partial_s c \rangle_{g(c)}}ds.
    \end{aligned}
\end{equation*}
Let $\Xi({p},q)$ denote the family of regular curves with ${c}(0)=p$ and ${c}(1)=q$.
The Riemannian distance between $p$ and $q$ is $d(p,q) = \inf_{c(s)\in\Xi}\mathcal{L}({c}(s))$.
By the Hopf-Rinow theorem, under suitable conditions a minimizing curve known as a \textit{minimum geodesic} ${\gamma}:[0,1] \rightarrow \mathcal{M}$ is guaranteed to exist with the unique property $\mathcal{E}({\gamma}(s))=\frac12\mathcal{L}({\gamma(s)})^2 \leq \frac12\mathcal{L}({c}(s))^2 \leq \mathcal{E}({c}(s))$.
In other words, a minimum geodesic is the shortest curve between two points on a manifold.
Geodesics can be characterized using the covariant derivative, which is a generalization of the directional derivative for vector fields defined along a curve. 
When the Riemannian manifold $(\mathcal{M},g)$ is embedded as surface $\mathcal{F} \subset \mathbb{R}^d$ in a higher-dimensional Euclidean space (which is always possible via the Nash embedding theorems), the covariant derivative can be interpreted as the projection of the standard derivative onto the tangent plane of the surface $\mathcal{F}$ at a given point. 
With this embedding in mind\footnote{The discussion uses the ambient Euclidean space only as a means of providing intuition about core concepts; it is well known that Riemannian geometry can be developed abstractly without an ambient space \cite{do1992riemannian}.}, several important properties of the covariant derivative can be deduced. 
The first is compatibility of the metric: for a regular curve $c(s)$ and vector fields $U, \, W : [0,1] \rightarrow T_{c(s)} \mathcal{M} : s \mapsto U(s), \, W(s)$, we have 
\begin{equation*}
    \frac{d}{ds} \langle U, W \rangle_{g(c(s))} = \langle \frac{D}{ds} U, W \rangle_{g(c(s))} + \langle U, \frac{D}{ds} W \rangle_{g(c(s))},
\end{equation*}
where we denote the covariant derivative as the operator $D/ds$ that projects the standard derivative onto the surface $\mathcal{F}$. 
Notationally, this operator is sometimes written as $\nabla_{\partial_s c}$ to explicitly show the differentiation is along the vector field $\partial_s c \in T_{c(s)} \mathcal{M}$.
We will present a coordinate description of the covariant derivative in \cref{sec:implementation}, but it is worth mentioning now that the covariant derivative captures the evolution of the components \emph{and} the coordinate system of a vector field as one moves along $c(s)$. 

The second important property is that the covariant derivative provides a necessary and sufficient condition for a curve to be a geodesic.
Specifically, if $\gamma(s)$ is a geodesic, then \cite{do1992riemannian}
\begin{equation*}
    \frac{D}{ds} \partial_s \gamma = \nabla_{\partial_s \gamma} \partial_s \gamma = 0,
\end{equation*}
which essentially states that ``speed'' of a geodesic, i.e., $\langle \partial_s \gamma, \partial_s \gamma \rangle_{g(\gamma)}$ is constant, or that the vector field $\partial_s \gamma$ is parallel transported along $\gamma(s)$.
Note that any minimum geodesic is a geodesic, but a geodesic may not be minimal in the sense of Riemannian distance because there can be several extremal curves that satisfy $D (\partial_s \gamma) / d s = 0$.
We will omit the ``minimal'' qualifier when discussing geodesics.

The final important property is that the covariant derivative is symmetric, i.e., torsion-free, so it commutes with the standard derivative. 
In other words, given a curve $c: [0,1] \times \mathbb{R}_+ \rightarrow \mathcal{M} : (s,\tau) \mapsto c(s,\tau)$, then $D (\partial_s c) / d \tau = D (\partial_\tau c) / d s$.
However, the covariant derivative does not necessarily commute with itself: for $c(s,\tau)$ and smooth vector field $W : [0,1] \rightarrow  T_{c(s,\tau)} \mathcal{M}$, one has \cite{do1992riemannian}
\begin{equation*}
    \frac{D}{d \tau} \frac{D}{d s} W - \frac{D}{d s} \frac{D}{d \tau} W = R(\partial_\tau c, \partial_s c) W
\end{equation*}
where $R : T \mathcal{M} \times T \mathcal{M} \times T \mathcal{M} \rightarrow T \mathcal{M}$ is the Riemannian curvature tensor that measures the non-commutativity of the covariant derivative. 
Later, a more intuitive measure of curvature will be discussed.

\section{Main Results}
\label{sec:main_result}
As discussed previously, there are several ways to compute geodesics given two points on a Riemannian manifold $(\mathcal{M},g)$.
The method of interest in this work involves the so-called geometric heat flow equation \cite{jost2005riemannian}
\begin{equation}
    \label{eq:heat_flow}
    \partial_\tau c = \alpha \frac{D}{d s} \partial_s c
\end{equation}
where $c: [0,1] \times \mathbb{R}_+ \rightarrow \mathcal{M} : (s,\tau) \rightarrow c(s,\tau)$ is a parameterized regular curve which has an additional dummy variable $\tau$, $D / ds$ is the covariant derivative, and $\alpha \in \mathbb{R}_{>0}$.
Formally, $c(s,\tau)$ is restricted to be a proper variation of a geodesic $\gamma(s)$ where $c(0,\tau) = \gamma(0)$ and $c(1,\tau) = \gamma(1)$.
From a dynamical systems perspective, \cref{eq:heat_flow} can be viewed as a gradient flow for the curve $c(s,\tau)$.
If $c(s,\tau)$ were a geodesic, then, by definition, $D (\partial_s c) / ds = 0 \implies \partial_\tau c = 0$, so the ``equilibrium points'' of \cref{eq:heat_flow} are geodesics.
The natural question then arises as to whether \cref{eq:heat_flow} is a stable PDE in the $L_2$ sense where any curve $c(s,\tau)$ that connects two points on $\mathcal{M}$ will converge in $L_2$ to a geodesic $\gamma(s)$ that connects the same two points.
The following proposition establishes the baseline convergence behavior of \cref{eq:heat_flow}.

\begin{proposition}
    \label{prop:heat_flow}
    Any regular curve connecting two points on the Riemannian manifold $(\mathcal{M},g)$ that satisfies the geometric heat flow equation \cref{eq:heat_flow} will asymptotically converge in $L_2$ to a geodesic connecting the same two points on $(\mathcal{M},g)$.  
\end{proposition}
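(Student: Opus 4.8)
The plan is to treat the Riemannian energy functional $\mathcal{E}(c(s,\tau)) = \tfrac12 \int_0^1 \langle \partial_s c, \partial_s c\rangle_{g(c)}\, ds$ as a Lyapunov functional for the flow $\tau \mapsto c(\cdot,\tau)$ and show it dissipates monotonically along \cref{eq:heat_flow}. First I would differentiate $\mathcal{E}$ in $\tau$, pass the derivative inside the integral, and invoke compatibility of the metric to write $\tfrac{d}{d\tau}\mathcal{E} = \int_0^1 \langle \tfrac{D}{d\tau}\partial_s c, \partial_s c\rangle_{g(c)}\, ds$. The torsion-free property then lets me swap the order of covariant differentiation, $\tfrac{D}{d\tau}\partial_s c = \tfrac{D}{ds}\partial_\tau c$, converting the integrand into a form to which I can apply metric compatibility a second time, this time integrating by parts in $s$.

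The integration by parts produces a boundary term $[\langle \partial_\tau c, \partial_s c\rangle_{g(c)}]_0^1$ and a bulk term $-\int_0^1 \langle \partial_\tau c, \tfrac{D}{ds}\partial_s c\rangle_{g(c)}\, ds$. The boundary term vanishes because the endpoints are held fixed, $c(0,\tau)=\gamma(0)$ and $c(1,\tau)=\gamma(1)$, so $\partial_\tau c = 0$ at $s \in \{0,1\}$. Substituting the heat-flow equation $\partial_\tau c = \alpha\, \tfrac{D}{ds}\partial_s c$ into the bulk term then yields $\tfrac{d}{d\tau}\mathcal{E} = -\alpha\, \|\tfrac{D}{ds}\partial_s c\|_{L_2[0,1]}^2 \le 0$, with equality if and only if $\tfrac{D}{ds}\partial_s c = 0$ in $L_2$, which is precisely the geodesic condition. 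Thus $\mathcal{E}(\tau)$ is non-increasing and bounded below (by $\tfrac12\, d(\gamma(0),\gamma(1))^2 \ge 0$, the energy of the minimum geodesic), hence it converges and the dissipation rate is integrable, $\int_0^\infty \|\tfrac{D}{ds}\partial_s c\|_{L_2[0,1]}^2\, d\tau < \infty$.

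To close the argument I would invoke a Barbalat-type lemma to upgrade this integrability to $\|\tfrac{D}{ds}\partial_s c\|_{L_2[0,1]} \to 0$ as $\tau \to \infty$, so the geodesic residual vanishes in $L_2$ and the curve asymptotically satisfies $\tfrac{D}{ds}\partial_s c = 0$. The main obstacle is this final step: Barbalat's lemma requires uniform continuity of the dissipation functional in $\tau$, which demands an a priori bound on $\tfrac{d}{d\tau}\|\tfrac{D}{ds}\partial_s c\|_{L_2[0,1]}^2$. Controlling that quantity brings in higher spatial derivatives of $c$ together with an explicit coupling to the curvature, via the non-commutativity identity $\tfrac{D}{d\tau}\tfrac{D}{ds}W - \tfrac{D}{ds}\tfrac{D}{d\tau}W = R(\partial_\tau c, \partial_s c)W$. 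The crux is therefore establishing that these higher-order norms remain bounded along the flow; this is exactly the regularity estimate that the subsequent curvature-dependent analysis must supply, and it is where the distinction between unconditional asymptotic convergence and conditional exponential convergence originates.
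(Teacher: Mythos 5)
Your dissipation computation is exactly the paper's: differentiate $\mathcal{E}$ in $\tau$, use metric compatibility and the torsion-free identity $\tfrac{D}{d\tau}\partial_s c = \tfrac{D}{ds}\partial_\tau c$, integrate by parts, kill the boundary term via the proper-variation condition, and substitute \cref{eq:heat_flow} to get $\tfrac{d}{d\tau}\mathcal{E} = -\alpha \int_0^1 \langle \tfrac{D}{ds}\partial_s c, \tfrac{D}{ds}\partial_s c\rangle_{g(c)}\, ds \le 0$. The divergence is in the concluding step. The paper invokes LaSalle's invariance principle: since $\mathcal{E}$ is positive definite and its $\tau$-derivative is negative semidefinite, the curve converges to the largest invariant set where $\tfrac{d}{d\tau}\mathcal{E}=0$, i.e., where $\tfrac{D}{ds}\partial_s c = 0$, which is the geodesic condition. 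You instead argue that $\mathcal{E}$ is bounded below, deduce integrability of the dissipation rate, and try to upgrade this to convergence via a Barbalat-type lemma --- and you then correctly observe that you cannot verify the uniform-continuity hypothesis without higher-order regularity estimates. As written, your proof therefore does not reach the conclusion of \cref{prop:heat_flow}: you stop at ``the geodesic residual would vanish if a regularity bound held,'' which is an honest but genuine gap.

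Two remarks to calibrate. First, the obstacle you identified is not an artifact of choosing Barbalat over LaSalle: LaSalle's invariance principle in an infinite-dimensional (PDE) setting also requires precompactness of the trajectory in the relevant topology, a hypothesis the paper invokes implicitly and does not verify; your missing regularity estimate and the paper's missing precompactness are essentially the same analytic content. Second, both arguments (yours and the paper's) actually establish that the defect $\tfrac{D}{ds}\partial_s c$ tends to zero in $L_2$; passing from this to the statement that the \emph{curve} $c(\cdot,\tau)$ converges in $L_2$ to a fixed geodesic $\gamma$ connecting the same endpoints is an additional step that neither write-up spells out. So your proposal is structurally faithful to the paper but incomplete at the same point where the paper is least rigorous; to close it along your route you would need an a priori bound on $\tfrac{d}{d\tau}\int_0^1 \langle \tfrac{D}{ds}\partial_s c, \tfrac{D}{ds}\partial_s c\rangle\, ds$, which, as you note, couples to the curvature through the commutation identity and is precisely what the paper's subsequent Jacobi-field analysis (\cref{prop:second_order_cov} and \cref{thm:main_1}) provides under the curvature bound.
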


\begin{proof}
    The proof loosely follows from \cite{jost2005riemannian}.
    Let $c : [0,1] \times  \mathbb{R}_+ : (s,\tau) \mapsto c(s,\tau)$ be a regular curve connecting two points on the Riemannian manifold ($\mathcal{M},g$). 
    Moreover, let $c(s,\tau)$ be a proper variation of the geodesic $\gamma(s)$.
    In other words, the curve $c(s,\tau)$ and geodesic $\gamma(s)$ must coincide at their endpoints for all $\tau \in \mathbb{R}_{>0}$, so $\partial_\tau c(0,\tau) = \partial_\tau c(1,\tau) = 0$.
    By definition, the Riemannian energy of curve $c(s,\tau)$ is
    \begin{equation*}
        \mathcal{E}(c(s,\tau)) = \frac{1}{2} \int_0^1 \langle\partial_s c, \partial_s c \rangle_{g(c)} ds.
    \end{equation*}
    With a slight abuse of notation, we will denote the Riemannian energy as a functional of $c(\tau)$, i.e., $\mathcal{E} = \mathcal{E}(c(\tau))$, since the integral essentially eliminates the $s$ dependency.
    Differentiating the Riemannian energy with respect to $\tau$ gives
    \begin{equation*}
        \begin{aligned}
            \frac{d}{d \tau}\mathcal{E}(c(\tau)) & = \frac{1}{2} \int_0^1 \frac{d}{d \tau} \langle \partial_s c, \partial_s c \rangle_{g(c)} ds \\
            & = \int_0^1 \langle \partial_s c, \frac{D}{d \tau} \partial_s c\rangle_{g(c)} ds \\ 
            & = \int_0^1 \langle \partial_s c, \frac{D}{d s} \partial_\tau c\rangle_{g(c)} ds,
        \end{aligned}
    \end{equation*} 
    where we make use of the metric compatibility property of the covariant derivative to eliminate the term associated with differentiating the metric, and the third line follows from symmetry, i.e., $D(\partial_s c) / d\tau = D(\partial_\tau c) / ds$. 
    Noting that
    \begin{equation*}
        \begin{aligned}
            \frac{d}{ds} \int_0^1 \langle \partial_s c, \partial_\tau c \rangle_{g(c)} ds  = &  \int_0^1 \langle \frac{D}{ds} \partial_s c, \partial_\tau c \rangle_{g(c)} ds  \\ 
            & + \int_0^1 \langle \partial_s c, \frac{D}{d s} \partial_\tau c \rangle_{g(c)} ds,            
        \end{aligned}
    \end{equation*}
    then $d\mathcal{E}(c(\tau)) / d \tau$ becomes
    \begin{equation*}
        \begin{aligned}
        \frac{d}{d \tau} \mathcal{E}(c(\tau)) & = \langle \partial_s c, \partial_\tau c \rangle_{g(c)} |^{s=1}_{s=0}  - \int_0^1 \langle \frac{D}{d s} \partial_s c, \partial_\tau c \rangle_{g(c)} ds \\ 
        & = - \int_0^1 \langle \frac{D}{d s} \partial_s c, \partial_\tau c \rangle_{g(c)} ds,
        \end{aligned}
    \end{equation*}
    where the inner product evaluation vanishes because $c(s,\tau)$ is a proper variation, so $\partial_\tau c(0,\tau) = \partial_\tau c(1,\tau) = 0$.
    Since $c(s,\tau)$ satisfies the geometric heat flow equation \cref{eq:heat_flow}, then
    \begin{equation*}
        \frac{d}{d \tau} \mathcal{E}(c(\tau)) = - \alpha \int_0^1 \langle \frac{D}{d s} \partial_s c, \frac{D}{ds} \partial_s c \rangle_{g(c)} ds,
    \end{equation*}
    which implies $d\mathcal{E}(c(\tau))/d \tau$ is negative semidefinite. 
    Since $\mathcal{E}(c(\tau))$ is positive definite and \cref{eq:heat_flow} is a parabolic PDE with strong smoothness properties, LaSalle's invariance principle \cite{luo2012stability} can be used to assert that the curve $c(s,\tau)$ will converge to the largest invariant set corresponding to $d \mathcal{E}(c(\tau)) / d \tau = 0$. 
    From above, $d \mathcal{E}(c(\tau)) / d \tau = 0 \iff D(\partial_s c) / ds = 0$, which is the criterion for a curve to be a geodesic.
    Therefore, $c(s,\tau) \rightarrow \gamma(s)$ in $L_2$ as $\tau \rightarrow \infty$.
\end{proof}

The system \cref{eq:heat_flow} being asymptotically stable in $L_2$ is an important result because it guarantees that any arbitrary curve will eventually converge in $L_2$ to a geodesic by solving \cref{eq:heat_flow}.  
However, this result can be strengthened to exponential convergence under certain conditions related to the curvature of the Riemannian manifold $(\mathcal{M},g)$.
To show this, we start with the following proposition.

\begin{proposition}
    \label{prop:second_order_cov}
    Let $c: [0,1] \times  \mathbb{R}_+ \rightarrow \mathcal{M} : (s,\tau) \mapsto c(s,\tau)$ be a regular curve that satisfies the geometric heat flow equation \cref{eq:heat_flow}.
    Then $c(s,\tau)$ also satisfies     
    \begin{equation}
        \label{eq:second_order_cov}
        \frac{1}{\alpha} \frac{D}{d \tau} \partial_\tau c = \frac{D^2}{d s^2} \partial_\tau c + R (\partial_\tau c, \partial_s c) \, \partial_s c,
    \end{equation}
    where the operators $D / d \tau$ and $D / d s$ are the covariant derivatives along the curve $c(s,\tau)$ and $R : T \mathcal{M} \times T \mathcal{M} \times T \mathcal{M} \rightarrow T \mathcal{M}$ is the Riemannian curvature tensor.
\end{proposition}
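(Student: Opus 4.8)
The plan is to differentiate the heat flow equation \cref{eq:heat_flow} covariantly in $\tau$ and then rearrange the resulting second covariant derivatives using the two structural properties of the covariant derivative established in the preliminaries: the non-commutativity identity involving the Riemannian curvature tensor $R$, and the torsion-free (symmetry) property. No integration or analytic estimate is needed here; the statement is a purely algebraic consequence of how covariant derivatives interact.

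First I would apply the operator $D/d\tau$ to both sides of \cref{eq:heat_flow}. Since $\alpha$ is a constant, this yields $\frac{D}{d\tau}\partial_\tau c = \alpha \frac{D}{d\tau}\frac{D}{ds}\partial_s c$. The object $\partial_s c$ is a smooth vector field along $c(s,\tau)$, so it is a legitimate argument for the curvature commutation identity stated in the preliminaries.

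Next, I would invoke that identity with $W = \partial_s c$, giving $\frac{D}{d\tau}\frac{D}{ds}\partial_s c = \frac{D}{ds}\frac{D}{d\tau}\partial_s c + R(\partial_\tau c, \partial_s c)\,\partial_s c$. Then I would use the torsion-free property $D(\partial_s c)/d\tau = D(\partial_\tau c)/ds$ to rewrite the first term on the right as $\frac{D}{ds}\frac{D}{ds}\partial_\tau c = \frac{D^2}{ds^2}\partial_\tau c$. Substituting back and dividing through by $\alpha$ produces exactly \cref{eq:second_order_cov}.

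The computation itself is short, so the only real obstacle is bookkeeping: one must keep the ordering of the covariant derivative operators straight, apply the curvature identity with the arguments in the correct slots (the sign convention in the commutator fixes which of $\partial_\tau c$ and $\partial_s c$ appears first), and ensure the symmetry property is applied to the inner derivative rather than the outer one. I would sanity-check the final arrangement by confirming that, in the flat case where $R \equiv 0$, the equation reduces to the expected linear heat equation $\frac{1}{\alpha}\frac{D}{d\tau}\partial_\tau c = \frac{D^2}{ds^2}\partial_\tau c$ for the velocity field $\partial_\tau c$, which is precisely the form needed for the subsequent $L_2$ stability analysis.
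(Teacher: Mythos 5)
Your proposal is correct and follows essentially the same argument as the paper's proof: covariantly differentiate \cref{eq:heat_flow} in $\tau$, apply the curvature commutation identity with $W = \partial_s c$, and then use the torsion-free property $D(\partial_s c)/d\tau = D(\partial_\tau c)/ds$ to obtain \cref{eq:second_order_cov}. The flat-case sanity check is a nice addition but does not change the substance of the argument.
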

\begin{proof}
    Suppose $c(s,\tau)$ satisfies the geometric heat flow equation \cref{eq:heat_flow}.  
    Since $\partial_\tau c \in T \mathcal{M}$, differentiating \cref{eq:heat_flow} requires the covariant derivative because the resulting vector field must still live in the tangent space of $\mathcal{M}$.
    Taking the covariant derivative of \cref{eq:heat_flow} with respect to $\tau$ gives
    \begin{equation*}
        \frac{1}{\alpha} \frac{D}{d \tau} \partial_\tau c = \frac{D}{d\tau} \frac{D}{d s} \partial_s c.
    \end{equation*}
    The order of differentiation on the right-hand side of the above expression can be reversed by recalling the definition of the Riemannian curvature tensor, namely for a curve $c(s,\tau)$ and vector field $W$ we have \cite{do1992riemannian}
    \begin{equation*}
        R(\partial_\tau c, \partial_s c) W = \frac{D}{d \tau} \frac{D}{d s} W - \frac{D}{d s} \frac{D}{d \tau} W,
    \end{equation*}
    so by setting $W = \partial_s c$ we get
    \begin{equation*}
        \frac{1}{\alpha} \frac{D}{d \tau} \partial_\tau c =  \frac{D}{d s} \frac{D}{d\tau} \partial_s c + R(\partial_\tau c, \partial_s c) \partial_s c.
    \end{equation*}
    Finally, by symmetry, $D (\partial_\tau c) / ds = D (\partial_s c) / d \tau $, so
    \begin{equation*}
        \frac{1}{\alpha} \frac{D}{d \tau} \partial_\tau c = \frac{D^2}{d s^2} \partial_\tau c + R(\partial_\tau c, \partial_s c) \partial_s c. \qedhere
    \end{equation*}
\end{proof}

\begin{corollary}
    Let $J : [0,1] \times \mathbb{R}_+ \rightarrow T\mathcal{M}$ be the vector field defined as $J(s,\tau) = \partial_\tau c$.
    By selecting an orthonormal basis parallel to $c(s,\tau)$, \cref{eq:second_order_cov} becomes
    \begin{equation}
        \label{eq:jacobi_heat}
        \frac{1}{\alpha} \frac{D}{d \tau} J = \partial^2_s J + R(J,\partial_s c) \, \partial_s c,
    \end{equation}
    where $\partial^2_s$ is the second partial derivative with respect to $s$ of the components of $J(s,\tau)$.
\end{corollary}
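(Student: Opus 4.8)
The plan is to exploit the fact that the iterated covariant derivative $D^2/ds^2$ collapses to an ordinary componentwise second derivative once the vector field is expressed in a frame that is parallel along the curve. After the substitution $J = \partial_\tau c$, \cref{eq:second_order_cov} reads
\[
    \frac{1}{\alpha} \frac{D}{d\tau} J = \frac{D^2}{ds^2} J + R(J, \partial_s c)\, \partial_s c,
\]
so the only work is to show that the diffusion term equals $\partial^2_s J$ in the chosen frame, while the curvature term, being tensorial, is unaffected by the change of frame.

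First I would construct, for each fixed $\tau$, an orthonormal frame $\{e_i(s,\tau)\}$ along the curve $s \mapsto c(s,\tau)$ by parallel transport in $s$: choose an orthonormal basis of $T_{c(0,\tau)}\mathcal{M}$ and propagate it by solving $\frac{D}{ds} e_i = 0$. Such a frame exists and is unique, and by the metric compatibility property stated earlier, $\frac{d}{ds}\langle e_i, e_j \rangle_{g(c)} = \langle \frac{D}{ds} e_i, e_j \rangle_{g(c)} + \langle e_i, \frac{D}{ds} e_j \rangle_{g(c)} = 0$, so the inner products remain constant and $\langle e_i, e_j \rangle_{g(c)} = \delta_{ij}$ holds for all $s$. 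This justifies calling the frame orthonormal, and it is precisely the property that will later let us identify the $L_2$ norm of $J$ with the Euclidean $L_2$ norm of its components.

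Next I would expand $J(s,\tau) = \sum_i J^i(s,\tau)\, e_i(s,\tau)$ and differentiate. Applying the Leibniz rule for the covariant derivative together with $\frac{D}{ds} e_i = 0$, one term vanishes and $\frac{D}{ds} J = \sum_i (\partial_s J^i)\, e_i$; differentiating a second time gives $\frac{D^2}{ds^2} J = \sum_i (\partial^2_s J^i)\, e_i$, which is exactly $\partial^2_s J$ once $J$ is identified with its component vector in the frame. Substituting this into the displayed identity yields the claimed form \cref{eq:jacobi_heat}.

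The only point requiring care is this reduction of $D^2/ds^2$ to $\partial^2_s$: it relies crucially on the frame being parallel in $s$, not merely orthonormal, so I would stress that both covariant differentiations are taken in the same direction as the transport defining the frame. I would not attempt to simplify the $\frac{D}{d\tau} J$ term on the left, since the statement retains it in covariant form; consequently the $\tau$-dependence of the frame plays no role, and no compatibility between the $s$- and $\tau$-transports is needed.
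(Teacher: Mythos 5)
Your proposal is correct and follows essentially the same route as the paper's proof: parallel-transport an orthonormal basis along $s \mapsto c(s,\tau)$ for fixed $\tau$, expand $J$ in that frame, and use $\frac{D}{ds}E_i = 0$ to collapse $\frac{D^2}{ds^2}J$ to $\partial_s^2 J$ componentwise. Your additional remarks—that metric compatibility is what keeps the transported frame orthonormal, that the curvature term is tensorial and hence frame-independent, and that the $\frac{D}{d\tau}J$ term is left covariant so no $\tau$-compatibility of the frame is needed—are correct refinements that the paper leaves implicit.
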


\begin{proof}
    Since the manifold $\mathcal{M}$ is smooth, then the tangent space $T_{c(s,\tau)} \mathcal{M}$ along curve $c(s,\tau)$ is guaranteed to exist and vary smoothly. 
    For a given $\tau = \tau'$, let $\{e_1,\dots,e_n\}$ be an orthonormal basis for $s = 0$ where $e_1$ is the unit vector in the direction of $\partial_s c(0,\tau')$.
    We can define orthonormal parallel vector fields $E_1(s,\tau'),\dots,E_n(s,\tau')$ by parallel transporting the basis $\{e_1,\dots ,e_n\}$ along $c(s,\tau')$. 
    By choosing a parallel orthonormal basis, we have, by definition, $D E_i(s,\tau') / d s = 0$.   
    For $J(s,\tau') = \sum_i a_i(s,\tau') E_i(s,\tau')$, this leads to
    \begin{equation*}
        \begin{aligned}
            \frac{D}{ds}J & = \sum^n_{i=1} \frac{D}{ds}\left(a_i(s,\tau') E_i(s,\tau')\right) \\
            & = \sum^n_{i=1} \left( \partial_s a_i(s,\tau') E_i(s,\tau') + a_i(s,\tau') \frac{D}{ds} E_i(s,\tau') \right) \\ 
            & = \sum^n_{i=1} \partial_s a_i(s,\tau') E_i(s,\tau')
        \end{aligned}
    \end{equation*}
    which is just $\partial_s J$. It trivially follows $D^2 J/ds^2 = \partial_s^2 J$. \qedhere
\end{proof}
 
Since the covariant derivative is linear and the Riemannian curvature tensor is trilinear, i.e., linear in all three of its arguments \cite{do1992riemannian}, \cref{eq:jacobi_heat} is a \emph{linear parabolic PDE} for the vector field $J(s,\tau) = \partial_\tau c$. 
Due to its similarity with the nonhomogeneous heat equation and Jacobi fields in Riemannian geometry \cite{do1992riemannian}, we call \cref{eq:jacobi_heat} the \emph{Jacobi heat flow equation}.

Characterizing the stability of \cref{eq:jacobi_heat} is not immediate because it is, in general, a coupled system of PDEs with temporally- and spatially-varying coefficients due to the presence of the Riemannian curvature tensor $R$. 
The analysis of \cref{eq:jacobi_heat} becomes simpler by replacing the Riemannian curvature tensor with the so-called sectional curvature. 
The following lemma states the relationship between these two mathematical objects.

\begin{lemma}
    \label{lemma:sectional}
    Let $K : T \mathcal{M} \times T \mathcal{M} \rightarrow \mathbb{R}$ be the \emph{sectional curvature} of the Riemannian manifold $(\mathcal{M},g)$ given by \cite{do1992riemannian}
    \begin{equation}
        \label{eq:sectional}
        K(J,\partial_s c) = \frac{\langle J, R(J,\partial_s c) \partial_s c \rangle_g}{\langle J, J \rangle_g \langle \partial_s c, \partial_s c \rangle_g - \langle J, \partial_s c \rangle^2_g},
    \end{equation}
    where $J, \partial_s c \in T \mathcal{M}$ and  $R: T \mathcal{M} \times T \mathcal{M} \times T \mathcal{M} \rightarrow T \mathcal{M}$ is the Riemannian curvature tensor.
    If the sectional curvature is positive (resp. non-positive), then the inner product $\langle J, R(J,\partial_s c) \partial_s c \rangle_g$ is positive (resp. non-positive).
\end{lemma}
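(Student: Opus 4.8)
The plan is to recognize the denominator of the sectional curvature in \cref{eq:sectional} as the Gram determinant of the pair $J, \partial_s c$ and to control its sign via the Cauchy--Schwarz inequality. The denominator is $\langle J, J\rangle_g \langle \partial_s c, \partial_s c\rangle_g - \langle J, \partial_s c\rangle_g^2$, and since $g \in \mathcal{S}_+$ the map $\langle \cdot, \cdot\rangle_g$ is a genuine inner product, so Cauchy--Schwarz gives $\langle J, \partial_s c\rangle_g^2 \leq \langle J, J\rangle_g \langle \partial_s c, \partial_s c\rangle_g$. Thus the denominator is non-negative, with equality if and only if $J$ and $\partial_s c$ are linearly dependent.

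Next I would argue that the denominator is in fact strictly positive whenever $K$ is well-defined. The hypothesis that the sectional curvature possesses a definite sign presupposes that the ratio in \cref{eq:sectional} is well-posed, i.e. that its denominator does not vanish; combined with the non-negativity just established, this forces the denominator to be strictly positive. For completeness I would also verify that the degenerate case cannot quietly slip in carrying a nonzero numerator: if $J = \lambda \partial_s c$, then the antisymmetry of the curvature tensor in its first two arguments yields $R(J, \partial_s c)\partial_s c = \lambda R(\partial_s c, \partial_s c)\partial_s c = 0$, so the numerator vanishes as well and $K$ collapses to the indeterminate form $0/0$ --- precisely the situation excluded by assuming $K$ has a sign.

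With the denominator established to be a strictly positive scalar, the sign of the ratio $K(J, \partial_s c)$ coincides with the sign of its numerator $\langle J, R(J, \partial_s c)\partial_s c\rangle_g$. Hence $K > 0$ forces the inner product to be positive and $K \leq 0$ forces it to be non-positive, which is exactly the claim. I expect the only point requiring genuine care to be the degenerate $J \parallel \partial_s c$ case; beyond that there is no real analytic obstacle, since once the Gram-determinant structure of the denominator is recognized the entire argument reduces to a single application of Cauchy--Schwarz.
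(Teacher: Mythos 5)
Your proposal is correct and follows essentially the same route as the paper: Cauchy--Schwarz shows the Gram-determinant denominator in \cref{eq:sectional} is strictly positive for linearly independent $J$ and $\partial_s c$, after which the sign of $K$ transfers directly to the sign of the numerator $\langle J, R(J,\partial_s c)\partial_s c\rangle_g$. Your additional observation --- that linear dependence forces the numerator to vanish by antisymmetry of $R$, so the degenerate case is a genuine $0/0$ excluded by the hypothesis --- is a nice bit of rigor the paper handles only implicitly by assuming linear independence.
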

\begin{proof}
    By Cauchy-Schwarz, if $J(s,\tau)$ and $\partial_s c$ are linearly independent, then $\langle J, \, \partial_s c \rangle^2_g < \langle J, J \rangle_g \langle \partial_s c, \partial_s c \rangle_g$, so the denominator in \cref{eq:sectional} is always positive.
    If $K(J,\partial_s c) > 0$, then $ 0 \leq \langle J, R(J,\partial_s c) \partial_s c \rangle_g \leq K(J,\partial_s c) \langle J, J, \rangle_g \langle \partial_s c, \partial_s c\rangle_g$.
    If $K(J,\partial_s c) \leq 0$ then $\langle J, R(J,\partial_s c) \partial_s c \rangle_g \leq 0$. 
\end{proof}

The sectional curvature is a more intuitive characterization of the curvature of manifolds.
In three or fewer dimensions it is identical to the Gaussian curvature for a surface. 
In higher dimensions, it can be interpreted as the Gaussian curvature of a surface defined by a two-dimensional plane formed by any two linearly independent vector fields in $T \mathcal{M}$ \cite{do1992riemannian}. 
Having non-positive sectional curvature means $(\mathcal{M},g)$ exhibits properties analogous to a hyperbolic or flat surface, while having positive sectional curvature corresponds to $(\mathcal{M},g)$ exhibiting properties analogous to a sphere.

We state one last lemma before proceeding to the first main result of this work; the proof can be found in the \nameref{sec:appendix}.

\begin{lemma}[\textbf{Poincar\'e Inequality}]
    \label{lemma:poincare}
    Let $W : [0,1] \times  \mathbb{R}_+ \rightarrow T \mathcal{M} : (s,\tau) \mapsto W(s,\tau)$ be a smooth vector field. 
    If $W(0,\tau) = W(1,\tau) = 0$ , then
    \begin{equation*}
        \int_0^1 \langle W, W \rangle \, ds \leq \frac{1}{4} \int_0^1 \langle \partial_s W, \partial_s W \rangle \, ds.
    \end{equation*}
\end{lemma}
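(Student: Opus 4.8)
The plan is to prove this as a standard Dirichlet--Poincar\'e inequality on the interval $[0,1]$, working componentwise: here the inner product is the ambient Euclidean one and $\partial_s W$ denotes ordinary (not covariant) differentiation of the components of $W$, so the tangent-space structure plays no role. The only tools needed are the fundamental theorem of calculus and the Cauchy--Schwarz inequality, and the constant $1/4$ will emerge from splitting the domain at the midpoint and estimating from the nearer endpoint on each half.

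First I would use the left boundary condition $W(0,\tau) = 0$ to write $W(s,\tau) = \int_0^s \partial_t W(t,\tau)\,dt$. Applying the triangle inequality for integrals followed by Cauchy--Schwarz gives
\begin{equation*}
\langle W, W \rangle = \Bigl\| \int_0^s \partial_t W \, dt \Bigr\|^2 \leq s \int_0^s \langle \partial_t W, \partial_t W \rangle \, dt \leq s \int_0^1 \langle \partial_s W, \partial_s W \rangle \, ds.
\end{equation*}
Symmetrically, using the right boundary condition $W(1,\tau) = 0$ and the representation $W(s,\tau) = -\int_s^1 \partial_t W \, dt$, the identical argument yields
\begin{equation*}
\langle W, W \rangle \leq (1-s) \int_0^1 \langle \partial_s W, \partial_s W \rangle \, ds.
\end{equation*}

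Next I would split $\int_0^1 \langle W, W\rangle\,ds$ at $s = \tfrac12$, applying the first bound (with weight $s$) on $[0,\tfrac12]$ and the second bound (with weight $1-s$) on $[\tfrac12,1]$. Since $\int_0^{1/2} s\,ds = \int_{1/2}^1 (1-s)\,ds = \tfrac18$, the two contributions combine to $\tfrac14 \int_0^1 \langle \partial_s W, \partial_s W\rangle\,ds$, which is exactly the claimed inequality.

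There is no serious obstacle here; the one point requiring care is the decision to estimate from the closer endpoint on each subinterval. Naively applying a single one-sided bound over all of $[0,1]$ would produce the weaker constant $\tfrac12$, so the midpoint split is precisely what recovers $\tfrac14$ from this elementary argument. I note that the sharp Poincar\'e constant for these boundary conditions is in fact $1/\pi^2$, obtainable via a Fourier sine expansion of $W$, but the cruder constant $\tfrac14$ is entirely sufficient for the downstream exponential-stability estimate.
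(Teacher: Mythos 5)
Your proof is correct and follows essentially the same route as the paper's: the fundamental theorem of calculus from each endpoint, Cauchy--Schwarz, and a combination of the two one-sided bounds that yields the constant $\tfrac14$. The only cosmetic differences are that the paper works on scalar components of $W$ and phrases the final step as integrating the pointwise bound $\min\{s,1-s\}$ over $[0,1]$, which is exactly your midpoint split since $\int_0^1 \min\{s,1-s\}\,ds = \tfrac18+\tfrac18 = \tfrac14$.
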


We are now ready to state our first main result.

\begin{theorem}
    \label{thm:main_1}
    The Jacobi heat flow equation \cref{eq:jacobi_heat} is exponentially stable in $L_2$ if the curvature for the Riemannian manifold $(\mathcal{M},g)$ satisfies $\langle J, R(J,\partial_s c) \partial_s c \rangle_g < 4 \langle J, J \rangle$ uniformly.
\end{theorem}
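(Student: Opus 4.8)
The plan is to construct an $L_2$ Lyapunov functional for the vector field $J$ and show its time derivative is bounded above by a strictly negative multiple of itself. I would take $V(\tau) = \tfrac12 \int_0^1 \langle J, J\rangle_g \, ds = \tfrac12 \|J\|_{L_2[0,1]}^2$, which is positive definite and vanishes exactly when $J \equiv 0$, i.e., when $c$ is a geodesic. Differentiating in $\tau$ and invoking metric compatibility of the covariant derivative gives $\frac{d}{d\tau} V = \int_0^1 \langle J, \frac{D}{d\tau} J\rangle_g \, ds$. Substituting the Jacobi heat flow equation \cref{eq:jacobi_heat} then splits the derivative into a diffusion term $\alpha \int_0^1 \langle J, \partial_s^2 J\rangle_g \, ds$ and a curvature term $\alpha \int_0^1 \langle J, R(J, \partial_s c)\partial_s c\rangle_g \, ds$.

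Next I would handle the diffusion term by integration by parts in $s$. Because $c(s,\tau)$ is a proper variation with fixed endpoints, $J(0,\tau) = J(1,\tau) = 0$, so the boundary contribution $\langle J, \partial_s J\rangle_g |_0^1$ vanishes and $\int_0^1 \langle J, \partial_s^2 J\rangle_g\, ds = -\int_0^1 \langle \partial_s J, \partial_s J\rangle_g\, ds = -\|\partial_s J\|_{L_2}^2$. Working in the parallel orthonormal frame of the Corollary is what makes this clean: there $\partial_s$ acts componentwise and the metric inner product reduces to the Euclidean one, so the usual integration-by-parts identity applies verbatim. This leaves $\frac{d}{d\tau} V = -\alpha \|\partial_s J\|_{L_2}^2 + \alpha \int_0^1 \langle J, R(J,\partial_s c)\partial_s c\rangle_g\, ds$.

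The two remaining ingredients are the curvature hypothesis and the Poincar\'e inequality (\cref{lemma:poincare}). Applying the hypothesis $\langle J, R(J,\partial_s c)\partial_s c\rangle_g < 4\langle J, J\rangle$ pointwise and integrating bounds the curvature term by $4\alpha\|J\|_{L_2}^2$, while \cref{lemma:poincare} (whose zero-boundary hypothesis is again supplied by the proper-variation condition) gives $\|\partial_s J\|_{L_2}^2 \geq 4\|J\|_{L_2}^2$. Combining these yields $\frac{d}{d\tau} V \leq -4\alpha\|J\|_{L_2}^2 + 4\alpha\|J\|_{L_2}^2$, so the leading terms cancel and it is precisely the strict gap in the curvature bound that produces decay. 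Quantitatively, if the hypothesis holds with a uniform margin, $\langle J, R(J,\partial_s c)\partial_s c\rangle_g \leq \lambda\langle J,J\rangle$ for some $\lambda < 4$, then $\frac{d}{d\tau}V \leq -2\alpha(4-\lambda)V$, and the Gr\"onwall inequality gives $\|J(\cdot,\tau)\|_{L_2}^2 \leq \|J(\cdot,0)\|_{L_2}^2 e^{-2\alpha(4-\lambda)\tau}$, i.e., exponential decay of $J = \partial_\tau c$ to zero in $L_2$.

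I expect the main obstacle to be exactly this last point: converting the pointwise strict inequality into a genuine exponential rate requires a \emph{uniform} margin $\lambda < 4$ over the spatial domain and all $\tau$, rather than mere pointwise strictness, so I would argue such a margin exists (e.g., via a uniform curvature bound on the compact region the flow traverses) before invoking Gr\"onwall. The secondary care-points are justifying the metric-compatibility and integration-by-parts manipulations in the $\tau$-varying parallel frame, and observing that $J \to 0$ in $L_2$ is precisely the statement that $c(s,\tau)$ converges to a geodesic, thereby strengthening the asymptotic conclusion of \cref{prop:heat_flow} to an explicit exponential rate.
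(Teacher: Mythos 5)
Your proposal follows essentially the same route as the paper's own proof: the $L_2$ Lyapunov functional $\tfrac12\int_0^1\langle J,J\rangle\,ds$ (the paper merely scales it by $1/\alpha$), integration by parts with the boundary term killed by the proper-variation condition, \cref{lemma:poincare} to get the $-4\|J\|_{L_2}^2$ term, and a uniform curvature margin $\lambda<4$ (the paper's $\kappa<4$) feeding into Gr\"onwall to conclude $V(\tau)\le V(0)e^{-2\alpha(4-\lambda)\tau}$. Your closing caveat---that the pointwise strict inequality in the hypothesis must be promoted to a uniform margin to yield an actual rate---is a real subtlety that the paper's proof handles only implicitly through its choice of constant $\kappa$, so your treatment is, if anything, slightly more careful on that point.
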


\begin{proof}
    Consider the Lyapunov functional
    \begin{equation*}
        V(J(\tau)) = \frac{1}{2 \alpha } \int_0^1 \langle J, J \rangle \, ds,
    \end{equation*}
    where we use the slight abuse of notation of expressing $J(s,\tau)$ as $J(\tau)$ on the left-hand side because the integral essentially eliminates the $s$ dependency.
    Differentiation and using \cref{eq:jacobi_heat} yields,
    \begin{equation*}
        \begin{aligned}
            \frac{d}{d \tau} V(J(\tau)) & = \frac{1}{\alpha} \int_0^1 \langle J, \frac{D}{d\tau} J \rangle \, ds \\
            & = \int_0^1 \langle J, \partial^2_s J \rangle \, ds + \int_0^1 \langle J, R(J,\partial_s c) \partial_s c \rangle \, ds.
        \end{aligned}
    \end{equation*}
    Using integration by parts for the first integral, we obtain
    \begin{equation*}
        \begin{aligned}
            \frac{d}{d \tau} V(J&(\tau)) = \langle J, \partial_s J \rangle |^{s=1}_{s=0} - \int_0^1 \langle \partial_s J, \partial_s J \rangle \, ds \\ 
            & \hspace{1.cm} + \int_0^1 \langle J, R(J,\partial_s c) \partial_s c \rangle \, ds \\ 
            & = - \int_0^1 \langle \partial_s J, \partial_s J \rangle \, ds + \int_0^1 \langle J, R(J,\partial_s c) \partial_s c \rangle \, ds,
        \end{aligned}
    \end{equation*}
    where the evaluation of the inner product is zero because $J(0,\tau) = J(1,\tau) = 0$. 
    Using \cref{lemma:poincare}, 
    \begin{equation*}
        \begin{aligned}
            \frac{d}{d \tau} V(J(\tau)) & \leq  - 4 \int_0^1 \langle J, J \rangle \, ds + \int_0^1 \langle J, R(J,\partial_s c) \partial_s c \rangle \, ds, \\ 
            & = - 8 \, \alpha \, V(J(\tau)) + \int_0^1 \langle J, R(J,\partial_s c) \partial_s c \rangle \, ds.
        \end{aligned}
    \end{equation*}
    The integral term in the above expression can be interpreted as a perturbation to the nominal exponential rate $8 \alpha$ when the curvature of $(\mathcal{M},g)$ is non-zero, i.e., $R(J,\partial_s c) \partial_s c \neq 0$.
    The most straightforward way to analyze the integral term is to use to notion of sectional curvature introduced in \cref{lemma:sectional}.    
    If $(\mathcal{M},g)$ has non-positive sectional curvature $K(J,\partial_s c) \leq 0$,
    then $\langle J, R(J,\partial_s c) \partial_s c \rangle < 0$ so $d V (J(\tau)) / d \tau \leq - 8\, \alpha \, V(J(\tau)) \implies V(J(\tau)) \leq V(J(0)) e^{-8\alpha\tau}$ and $V(J(\tau))$ converges to zero exponentially.
    If instead $(\mathcal{M},g)$ has positive sectional curvature, 
    $\langle J, R(J,\partial_s c) \partial_s c \rangle_g \leq K(J,\partial_s c) \langle \partial_s c, \partial_s c\rangle_g \langle J,J \rangle_g < \kappa \langle J,J \rangle$ for $\kappa \in \mathbb{R}_{>0}$, then $d V(J(\tau)) / d \tau \leq -2 \alpha (4 - \kappa) V(J(\tau))$ so $V(J(\tau))$ still exhibits exponential convergence if $\kappa < 4$.    
    We can conclude $V(J(\tau)) \rightarrow 0$ exponentially so long as the curvature of $(\mathcal{M},g)$ is not too positive.
    Moreover, based on how $V(J(\tau))$ is defined, the Jacobi heat flow equation \cref{eq:jacobi_heat} is exponentially stable in $L_2$.
\end{proof}

We now state the second main result of this work.

\begin{theorem}
    \label{thm:main_2}
    Any curve that satisfies the geometric heat flow PDE \cref{eq:heat_flow} will converge exponentially in $L_2$ to a geodesic connecting any two points on manifold $\mathcal{M}$ if the curvature of ($\mathcal{M},g$) satisfies $\langle J, R(J,\partial_s c) \partial_s c \rangle_g < 4 \langle J, J \rangle$ uniformly.
\end{theorem}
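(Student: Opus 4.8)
The plan is to bootstrap from \cref{thm:main_1}, which controls the evolution velocity $J = \partial_\tau c$, up to convergence of the curve $c(s,\tau)$ itself. The key observation is that $J$ is literally the $\tau$-derivative of $c$, so exponential $L_2$ decay of $J$ should force $c(\cdot,\tau)$ to be a Cauchy family in $L_2$ whose limit is attained at an exponential rate. First I would extract an explicit decay estimate from the proof of \cref{thm:main_1}: under the stated curvature condition the Lyapunov analysis gives $V(J(\tau)) \leq V(J(0))\, e^{-2\alpha(4-\kappa)\tau}$, and since $V(J(\tau)) = \tfrac{1}{2\alpha}\|J(\tau)\|_{L_2}^2$ this yields
\begin{equation*}
    \|\partial_\tau c(\tau)\|_{L_2} = \|J(\tau)\|_{L_2} \leq C\, e^{-\lambda \tau}, \qquad \lambda := \alpha(4-\kappa) > 0,
\end{equation*}
with $C = \sqrt{2\alpha\, V(J(0))}$. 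The requirement $\lambda > 0$ is exactly the hypothesis that the curvature is not too positive.

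Next I would pass from the velocity to the curve. Working in the ambient embedding $\mathcal{F} \subset \mathbb{R}^d$, where the induced inner product on tangent vectors agrees with $\langle\cdot,\cdot\rangle_g$, the map $c(\cdot,\tau)$ takes values in $\mathbb{R}^d$ and, for $\tau_2 > \tau_1$, satisfies $c(s,\tau_2) - c(s,\tau_1) = \int_{\tau_1}^{\tau_2}\partial_\tau c(s,\tau')\,d\tau'$. Applying Minkowski's integral inequality together with the decay estimate gives
\begin{equation*}
    \|c(\tau_2) - c(\tau_1)\|_{L_2} \leq \int_{\tau_1}^{\tau_2}\|\partial_\tau c(\tau')\|_{L_2}\,d\tau' \leq \frac{C}{\lambda}\left(e^{-\lambda\tau_1} - e^{-\lambda\tau_2}\right).
\end{equation*}
Hence $\{c(\cdot,\tau)\}$ is Cauchy in the complete space $L_2([0,1];\mathbb{R}^d)$ and converges to some $\gamma \in L_2$; letting $\tau_2 \to \infty$ yields $\|c(\tau) - \gamma\|_{L_2} \leq (C/\lambda)\, e^{-\lambda \tau}$, i.e.\ exponential convergence in $L_2$. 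Since the endpoints $c(0,\tau)$ and $c(1,\tau)$ are held fixed for all $\tau$, the limit $\gamma$ connects the same two points.

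Finally I would identify $\gamma$ as a geodesic. Because $\partial_\tau c = \alpha\, D(\partial_s c)/ds \to 0$ in $L_2$, the limiting curve satisfies $D(\partial_s \gamma)/ds = 0$; alternatively, \cref{prop:heat_flow} already guarantees asymptotic convergence to a geodesic, and the exponentially-attained $L_2$ limit found here must coincide with it by uniqueness of limits. The main obstacle I anticipate is precisely this bridging step: \cref{thm:main_1} bounds only the velocity $\partial_\tau c$, not the displacement of $c$, so the argument rests on the integral (Minkowski) estimate and on the function-space technicalities — verifying that the $L_2$ norm controlling $c$ is the same ambient/intrinsic norm appearing in \cref{thm:main_1}, that completeness of $L_2$ furnishes the limit $\gamma$, and that enough regularity persists in the limit to pass the geodesic condition $D(\partial_s\gamma)/ds = 0$ to $\gamma$ (for which \cref{prop:heat_flow} provides a rigorous anchor).
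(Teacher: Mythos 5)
Your proof is correct, but it takes a genuinely different route from the paper's. The paper argues through the Riemannian energy: starting from $d\mathcal{E}(c(\tau))/d\tau = -\tfrac{1}{\alpha}\int_0^1 \langle \partial_\tau c, \partial_\tau c\rangle_{g(c)}\, ds$ (obtained by combining \cref{prop:heat_flow} with \cref{eq:heat_flow}), it integrates over $[\tau,\infty)$, uses \cref{prop:heat_flow} to identify the limiting energy as $\mathcal{E}(\gamma)$, introduces a uniform bound $G \preceq \eta I$ on the metric to relate the $g$-weighted integrand to the Lyapunov functional $V$ of \cref{thm:main_1}, and then uses $V(J(\tau)) \leq V(J(0))e^{-\rho\tau}$ to conclude $\mathcal{E}(c(\tau)) - \mathcal{E}(\gamma) \leq C e^{-\rho\tau}$; the final passage from exponential convergence of the \emph{energy} to exponential $L_2$ convergence of the \emph{curve} is then asserted from ``how the Riemannian energy is defined'' rather than argued. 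You instead integrate the velocity in $\tau$: writing $c(\cdot,\tau_2)-c(\cdot,\tau_1) = \int_{\tau_1}^{\tau_2}\partial_\tau c\, d\tau'$ in the isometric embedding and applying Minkowski's integral inequality with the decay $\|J(\tau)\|_{L_2} \leq C e^{-\lambda\tau}$, $\lambda = \alpha(4-\kappa)$, extracted from \cref{thm:main_1}, which yields a Cauchy estimate and hence $\|c(\tau)-\gamma\|_{L_2} \leq (C/\lambda)e^{-\lambda\tau}$ directly, with \cref{prop:heat_flow} invoked only to identify the limit as a geodesic. What each buys: your argument lands exactly on the quantity the theorem is about --- the $L_2$ distance between $c(\cdot,\tau)$ and $\gamma$ --- and thereby closes a step the paper leaves informal (the energy gap $\mathcal{E}(c)-\mathcal{E}(\gamma)$ constrains the norms of $\partial_s c$ versus $\partial_s \gamma$, not the displacement $c-\gamma$ itself, so the paper's last inference would need an additional Poincar\'e-type or second-variation argument); it also dispenses with the auxiliary assumption $G \preceq \eta I$ that the paper inserts mid-proof, at the price of leaning on the Nash embedding to equate intrinsic and ambient norms, which the paper uses only for intuition. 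The paper's route, conversely, delivers exponential convergence of the Riemannian energy itself --- the quantity that matters in the contraction-control application --- but as a proof of the stated $L_2$ claim it is the less direct of the two.
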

\begin{proof}
    From \cref{prop:heat_flow} and \cref{eq:heat_flow} we have
    \begin{equation*}
       \frac{d}{d \tau} \mathcal{E}(c(\tau)) = - \frac{1}{\alpha} \int_0^1 \langle \partial_\tau c, \partial_\tau c \rangle_{g(c)} ds.
    \end{equation*}
    Integrating both sides from $\tau$ to $\infty$ gives
    \begin{equation*}
        \mathcal{E}(\gamma) - \mathcal{E}(c(\tau)) = - \int_\tau^\infty \left( \frac{1}{\alpha} \int_0^1 \langle \partial_\tau c, \partial_\tau c \rangle_{g(c)} ds  \right) d\tau,
    \end{equation*}
    where we have used the result of \cref{prop:heat_flow} to assert $\lim_{\tau\rightarrow \infty} \mathcal{E}(c(\tau)) = \mathcal{E}(\gamma)$.
    If we assume that the Riemannian metric tensor $G$ is uniformly bounded, then there exists $\eta \in \mathbb{R}_{>0}$ such that $G \preceq \eta I$.
    With this assumption, we arrive at
    \begin{equation*}
        \begin{aligned}
            \mathcal{E}(c(\tau)) - \mathcal{E}(\gamma) \leq 2 \, \eta \int_\tau^\infty \left( \frac{1}{2 \alpha} \int_0^1 \langle \partial_\tau c, \partial_\tau c \rangle \, ds  \right) d\tau,
        \end{aligned}
    \end{equation*}
    where the integral in parentheses is the Lyapunov functional from \cref{thm:main_1}.
    If we assume that the curvature of $(\mathcal{M},g)$ satisfies the bound stated in \cref{thm:main_1}, then there exists $\rho \in \mathbb{R}_{>0}$ such that $V(J(\tau)) \leq V(J(0))e^{-\rho \tau}$. 
    We then have
    \begin{equation*}
        \begin{aligned}
            \mathcal{E}(c(\tau)) - \mathcal{E}(\gamma) & \leq 2 \, \eta \int_\tau^\infty V(J(\sigma)) \, d \sigma \\
            & = 2 \, \eta \, V(J(0)) \int_\tau^\infty e^{-\rho \sigma} d \sigma \\
            & = C e^{-\rho \tau},
        \end{aligned}
    \end{equation*}
    for some $C \in \mathbb{R}_{>0}$, which implies that $\mathcal{E}(c(\tau)) \rightarrow \mathcal{E}(\gamma)$ exponentially as $\tau \rightarrow \infty$.
    Because of how the Riemannian energy is defined, and since it is positive definite for all non-trivial curves, then $c(s,\tau) \rightarrow \gamma(s)$ exponentially in $L_2$ as $\tau \rightarrow \infty$ if the curvature condition of \cref{thm:main_1} is met. \qedhere
\end{proof}

\section{PDE Pseudospectral Solver}
\label{sec:implementation}
The appeal of using the geometric heat flow equation \cref{eq:heat_flow} to compute geodesics is that i) any numerically stable solver is guaranteed to converge to a geodesic based on the results of \cref{sec:main_result} and ii) there are several fast algorithms (some parallelizable) that can solve \cref{eq:heat_flow}. 
If $c(s,\tau) = (x_1(s,\tau),\dots, x_n(s,\tau))$, 
then in local coordinates, \cref{eq:heat_flow} is
\begin{equation}
    \label{eq:coordiantes}
    \frac{1}{\alpha} \partial_\tau x_i = \partial^2_s x_i + \sum^n_{j,k=1} \Gamma^i_{jk} \, \partial_s x_j \, \partial_s x_k,
\end{equation}
where $\Gamma^i_{jk}$ are the Christoffel symbols given by
\begin{equation*}
    \Gamma^i_{jk} = \frac{1}{2} \sum_{m=1}^n g_{im}^{-1} \left( \frac{\partial g_{mj}}{\partial x_k} + \frac{\partial g_{mk}}{\partial x_j} - \frac{\partial g_{jk}}{\partial x_m} \right),
\end{equation*}
with $g_{ij}$ being the $i, j$ component of the metric tensor $G$ in local coordinates.
Solving \cref{eq:coordiantes} can be done efficiently with pseudospectral methods and the Chebyshev polynomials \cite{boyd2001chebyshev}
\begin{align*}
    T_0(z)=1, ~~ T_1(z) = z, ~~ T_{n+1}(z) = 2\,z\,T_{n}(z)-T_{n-1}(z),
\end{align*}
as basis functions for $c(s,\tau)$ where $z\in[-1,1]$.
In other words, we approximate each component $x_i(s,\tau)$ of the curve $c(s,\tau)$ as
$x_i(s,\tau) = \sum_{j=0}^D c_{ij}(\tau) T_j(s)$ where $D$ is the polynomial degree of the approximation and $c_{ij}(\tau)$ are the Chebyshev coefficients that must be determined as $\tau \rightarrow \infty$.

Rather than reformulating \cref{eq:coordiantes} as an ODE for the coefficients $c_{ij}(\tau)$, we generated $D+1$ collocation points for an arbitrary initial guess for the coefficients, and propagated the value of $c(s,\tau)$ at each node until convergence.  
We use the Chebyshev–Gauss–Lobatto (CGL) points
\begin{equation}
    \label{eq:cgl}
    s_k= \frac12 \left(1-\cos\left(\frac{k\pi}{D}\right) \right), \ \ k=0,1, \dots, D
\end{equation}
for $s=(z+1)/2\in[0,1]$, and replacing the right-hand side of \cref{eq:coordiantes} with the Chebyshev differentiation matrices \cite{boyd2001chebyshev}, the values of $x_i(s,\tau)$ at each node $s_k$ can be found numerically by recursively calling a numerical ODE solver
\begin{equation*}
    x_i(s_k,\tau_{\ell+1}) = \texttt{odeSolver} \Bigl( \partial_\tau x_i(s_k,\tau_\ell), \, \Delta \tau \Bigr),
\end{equation*}
where $\partial_\tau x_i({s_k},\tau_\ell)$ is \cref{eq:coordiantes} evaluated at $s_k$ (generated via \cref{eq:cgl}) and the current time step $\tau_\ell$ and $\Delta \tau$ is the step size of the ODE solver.
Once each $x_i(s_k,\cdot)$ changes less than a user-specified threshold, the nodes are projected back to a Chebyshev basis using the Vandermonde matrix to obtain a geodesic expressed with the Chebyshev basis \cite{boyd2001chebyshev}.
We found that only a small number of nodes was required to accurately approximate the right-hand of \cref{eq:coordiantes} for most cases we tested, which is likely due to the use of Chebyshev differentiation matrices, rather than other numerical differentiation techniques.
The method outlined above is essentially the method of lines \cite{schiesser2012numerical}, but with the spatial derivatives computed using pseudospectral differentiation.  

\section{Numerical Evaluations}
\label{sec:example}
The pseudospectral geometric heat flow method was evaluated in two different test cases.
The first was to compute a geodesic connecting two points on surfaces with known analytic Riemannian metrics, namely the sphere, torus, and egg box. 
This is analogous to solving the point-to-point motion planning problem where the path must live on a non-Euclidean surface.
The second was to generate geodesics for control, as is necessary with controllers designed using contraction theory \cite{manchester2017control}.
The high computational cost of calculating geodesics has been a major obstacle to using contraction-based controllers in practice, motivating the work by \cite{wang2020continuous,tsukamoto2021contraction}.
Our work immediately addresses this issue. 

Our PDE pseudospectral implementation used the CVODE solver from the ODES scikit Python library and was compared with the gradient descent method from \cite{leung2017nonlinear}.
The optimization-based method from \cite{leung2017nonlinear} minimizes the Riemannian energy functional for a curve represented with Chebyshev polynomials; the optimization problem is posed over the space of Chebyshev coefficients. 
The gradient descent method typically requires more sample points $N$ than theoretically needed, i.e., $N > D+1$, to improve integration accuracy. 
We reimplemented \cite{leung2017nonlinear} in Python to fairly compare the two methods.
Unless otherwise noted, we set $\alpha = 4$, and all other hyperparameters are explicitly stated in the tables or figures. 
All tests were conducted on a 2020 MacBook Pro with an 2GHz Intel Core i5 processor.

\subsection{Geodesics on 2D Surfaces}
The proposed PDE pseudospectral method was first tested by computing point-to-point geodesics on three 2D surfaces embedded in $\mathbb{R}^3$: the sphere, the torus, and the egg box. 
\cref{tab:geod-surfaces} shows the length and computation times computed by the PDE pseudospectral method and gradient descent for the three surfaces tested.
The first observation to note is that the geodesic lengths computed by each method are identical, which confirms the convergence analysis presented in \cref{sec:main_result}.
The second observation is that the PDE pseudospectral method exhibits much faster computation times than gradient descent for the sphere and torus.
This was attributed to the theoretical convergence guarantees of the proposed method (which gradient descent does not have) and the low computational overhead of propagating the geometric heat flow equation forward in $\tau$.
In contrast, the gradient descent method performed slightly better in computation time on the egg box surface.
This was due to having to use a high-degree Chebyshev polynomial basis to reasonably approximate the geodesic on such a complex surface.
It can thus be concluded that the PDE pseudospectral method works very well, both in terms of accuracy and computation time, but performance degradation can occur for complex surfaces, e.g., those with rapid oscillations in the curvature tensor.

\begin{table}[t]
\vspace{0.1in}
\centering
\footnotesize
\caption{Surface benchmarks for the PDE pseudospectral method and our gradient descent implementation of \cite{leung2017nonlinear}. $D$ is the degree of the polynomial used and $N$ is the collocation points for the gradient descent.}
\label{tab:geod-surfaces}
\renewcommand{\arraystretch}{1.25}
\setlength{\tabcolsep}{4pt}
\begin{tabular}{l|ccc|ccc}
\hline
\multirow{2}{*}{\textbf{Surface}} &
\multicolumn{3}{c|}{\textbf{PDE Pseudospectral}} &
\multicolumn{3}{c}{\textbf{Optimization \cite{leung2017nonlinear}}} \\
\cline{2-7}
 & $D$ & Length & Time (ms) & $D,N$ & Length & Time (ms) \\
\hline
Sphere &
$7$   & 2.33 & \textbf{6.63} & $7,11$   & 2.33 & 9.79 \\
Torus  &
$11$   & 16.5 & \textbf{5.04}  & $11,15$  & 16.5 & 20.2 \\
Egg Box &
$500$  & 7.36 & 150E3   & $250,350$& {7.36} & \textbf{130E3} \\
\hline
\end{tabular}
\end{table}

\begin{figure}[t]
    \centering
    \begin{subfigure}[t]{0.48\linewidth}
        \centering       
        \includegraphics[width=\linewidth,trim={15 30 20 30, clip}]{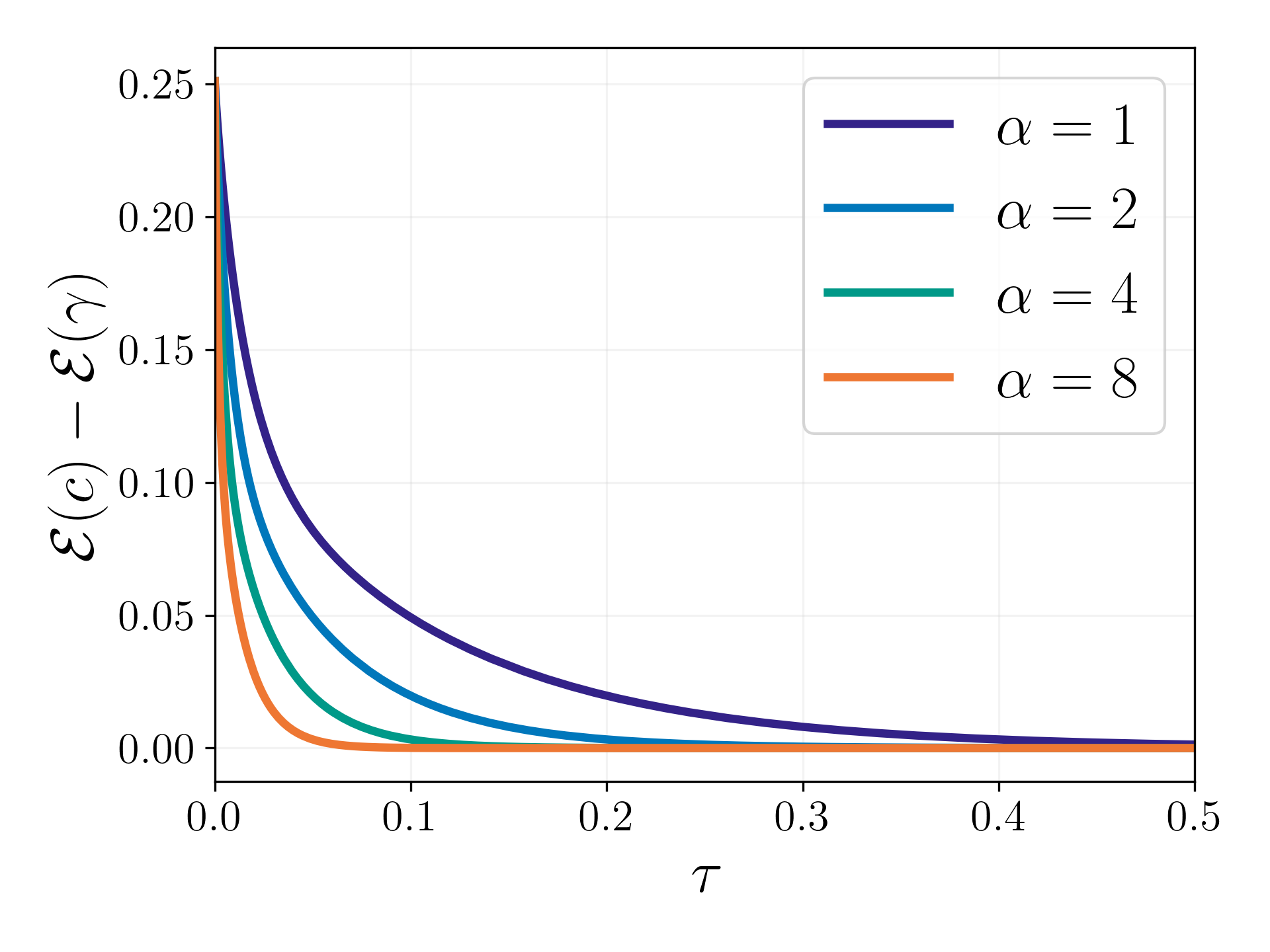}
        \caption{ Evolution of Riemannian energy on sphere for different values of $\alpha$.}
        \label{fig:alpha}
    \end{subfigure}
    \hfill
    \begin{subfigure}[t]{0.48\linewidth}
        \centering
        \includegraphics[width=\linewidth,trim={15 30 20 30, clip}]{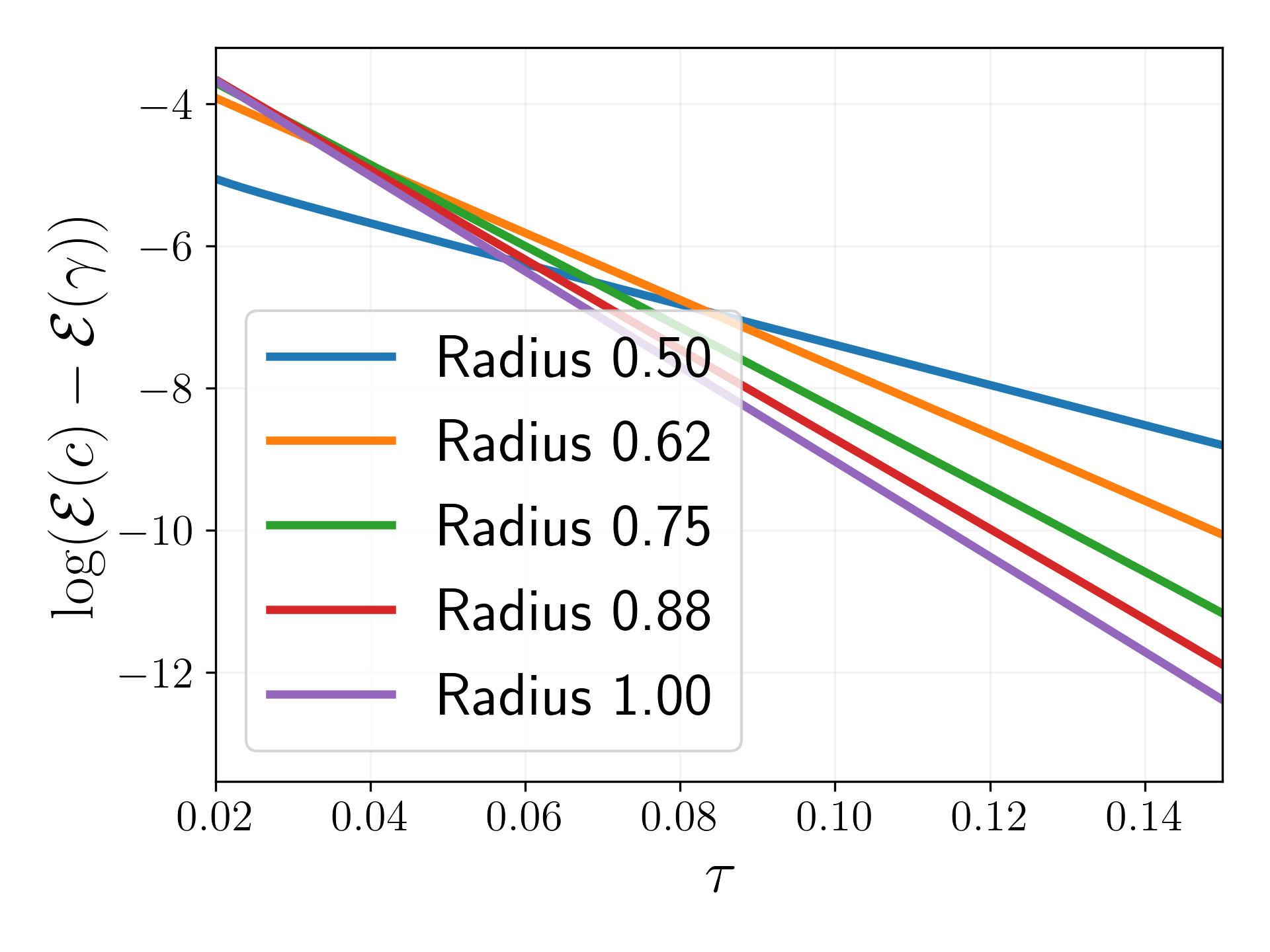}
        \caption{Convergence rate of Riemannian energy for spheres of different radii.}
        \label{fig:slopes}
    \end{subfigure}

    \caption{Convergence rate tests on spherical surface. (a): The proposed method has a high exponential convergence rate as $\alpha$ increases. (b): The exponential convergence rate of the proposed method is reduced as the radius of the sphere becomes smaller (curvature becomes larger).}
    \label{fig:sphere_res}
    \vspace{-0.25in}
\end{figure}

A more detailed study of the sphere was also conducted to confirm the effects of positive curvature on the convergence rate of the proposed method as predicted in \cref{sec:main_result}.
The baseline performance of the algorithm was first evaluated by quantifying how the hyperparameter $\alpha$ affects the convergence rate of $\cref{eq:heat_flow}$ for a sphere (which has constant positive sectional curvature). 
\cref{fig:alpha} shows that \cref{eq:heat_flow} indeed exhibits exponential convergence for a fixed radius sphere, and that the convergence rate increases with $\alpha$.
\cref{fig:slopes} shows that \cref{eq:heat_flow} still exhibits exponential convergence even as the radius of the sphere becomes smaller (sectional curvature increases), but the rate of convergence slows.
The behaviors observed in \cref{fig:sphere_res} all confirm the conclusions reached in \cref{sec:main_result}: curves propagated through the geometric heat flow equation \cref{eq:heat_flow} will exponentially converge to a geodesic with the convergence rate dictated by the hyperparameter $\alpha$ and the curvature of the Riemannian manifold.

\subsection{Contraction-Based Control with Geodesics in the Loop}
Contraction-based control leverages tools from Riemannian geometry to construct nonlinear tracking controllers (expressed implicitly) that exponentially track a desired trajectory \cite{manchester2017control}.
The underlining principle of contraction-based control is to find a Riemannian metric tensor offline that ensures that the Riemannian energy of a geodesic connecting the desired and actual state decreases exponentially. 
In other words, the Riemannian energy acts as a control Lyapunov function. 
The key online step is to compute a geodesic at each control epoch; as discussed above, this is computationally demanding if gradient descent is used. 
We compared the performance of our PDE pseudospectral solver with gradient descent for a third-order system that possesses a non-flat Riemannian metric \cite{manchester2017control,leung2017nonlinear,lopez2020adaptive}.
The metric tensor was found to be quadratic in one of the states via sum-of-squares programming. 
See \cite{manchester2017control,leung2017nonlinear,lopez2020adaptive} for more details about the system and the computed metric tensor.

\begin{table}[t]
\vspace{0.1in}
\centering
\caption{Comparison of computation times averaged over 100 runs for the example from \cite{leung2017nonlinear}. $D$ indicates the degree of the polynomial used, and $N$ is the validation nodes used for the optimization method.}
\label{tab:geodesic_results}
\renewcommand{\arraystretch}{1.2}
\setlength{\tabcolsep}{10pt}
\footnotesize
\begin{tabular}{c|cc|cc}
\hline
\multirow{2}{*}{\textbf{$x_0$}} &
\multicolumn{2}{c|}{\textbf{PDE Pseudospectral}} &
\multicolumn{2}{c}{\textbf{Optimization \cite{leung2017nonlinear}}} \\
\cline{2-3} \cline{4-5}
 & $D$ & Time (ms) & $D,N$ & Time (ms) \\
\hline
$[1,1,1]^\top$ & $4$  & \textbf{3.24} & $4,8$  & 5.34 \\
$[3,3,3]^\top$ & $4$  & \textbf{3.95} & $4,8$  & 7.81 \\
$[5,5,5]^\top$ & $5$  & \textbf{4.86} & $5,9$  & 10.4 \\
$[7,7,7]^\top$ & $6$  & \textbf{5.64} & $6,10$ & 17.3 \\
$[9,9,9]^\top$ & $7$  & \textbf{5.48} & $7,11$ & 23.0 \\
\hline
\end{tabular}
\vspace{-0.1in}
\end{table}

\begin{figure}[t]
\vspace{0.1in}
    \centering
    \begin{subfigure}[t]{0.48\linewidth}
        \centering
        \includegraphics[width=\linewidth,trim={10 30 20 30, clip}]{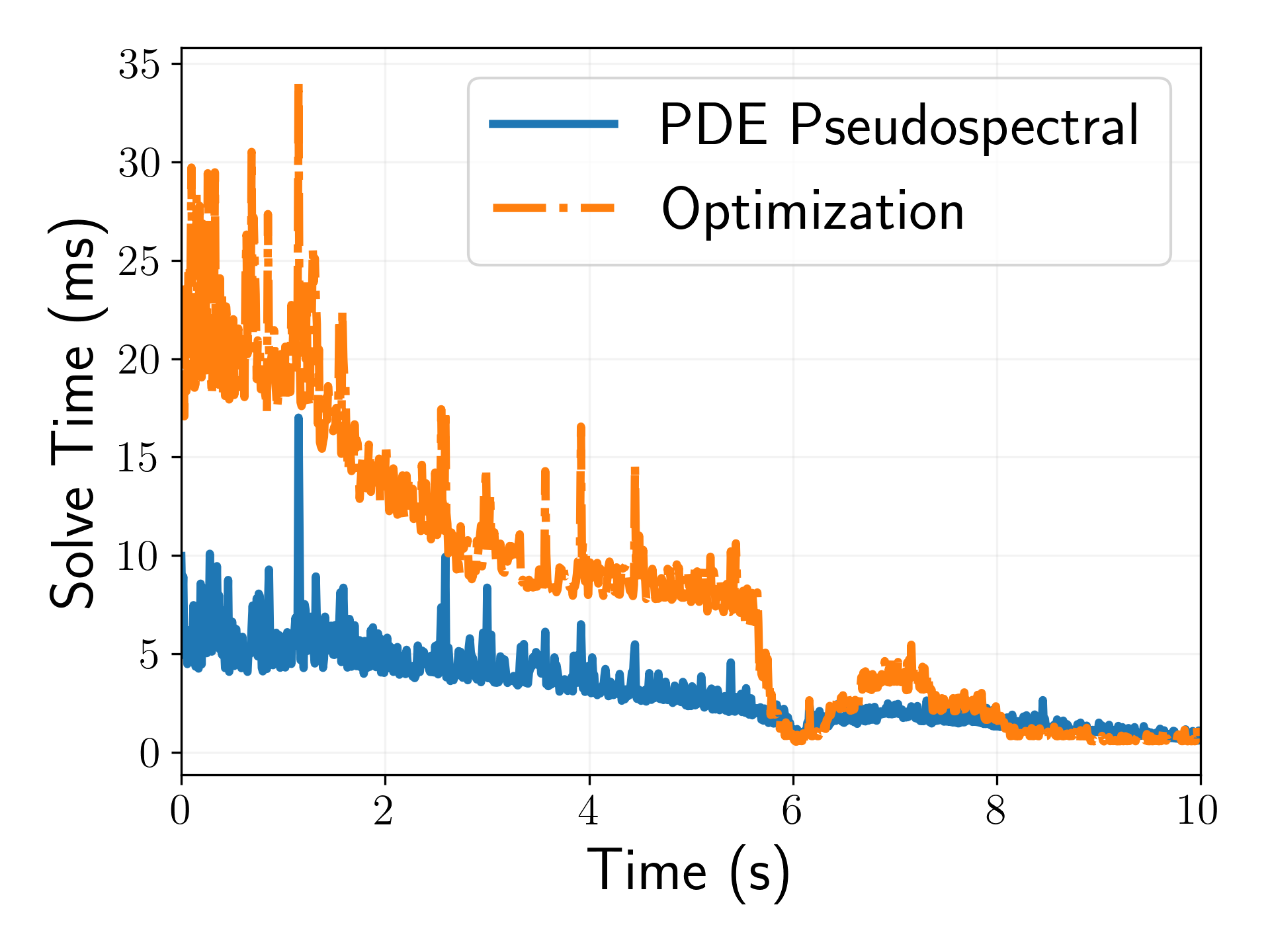}
        \caption{Time to compute the geodesic at each time step for both methods.}
        \label{fig:CCM_times}
    \end{subfigure}
    \hfill
    \begin{subfigure}[t]{0.48\linewidth}
        \centering        \includegraphics[width=\linewidth, trim={10 30 20 30, clip}]{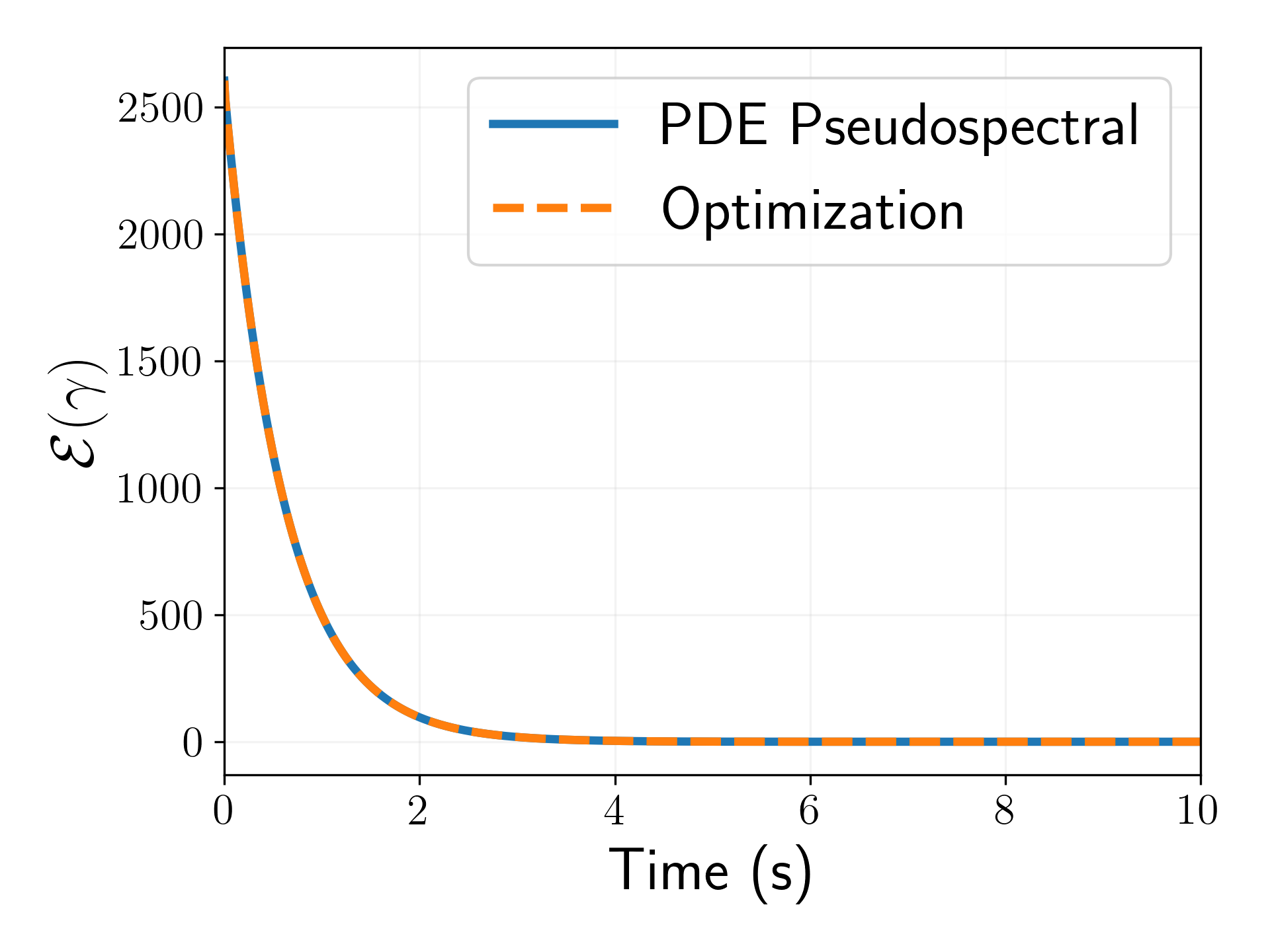}
        \caption{Evolution of Riemannian energies for both methods.}
        \label{fig:Energies_CCM}
    \end{subfigure}
    \hfill
    \caption{Comparison of the PDE pseudospectral method (ours) and gradient descent with a contraction-based controller. (a): The proposed method can be more than 3x faster than numerical optimization. (b) The Riemannian energy evolves nearly identically for both methods.}
    \label{fig:CCM_results}
    \vspace{-0.2in}
\end{figure}

The first set of tests we conducted was analogous to the previous case study: quantifying the computation time for the PDE pseudospectral and gradient descent methods. 
\cref{tab:geodesic_results} shows the computation times for the two methods where the initial state is varied (the desired state was the origin).
As seen in the table, the computation of gradient descent grows substantially---as much as 5x---as the initial state moves away from the origin.
Conversely, our PDE pseudospectral method sees a modest increase of at most 1.6x, again showcasing the method's computational efficiency.

We then compared our PDE pseudospectral method to the gradient descent method when the computed geodesics are used in feedback.
The metric and controller were constructed in the same way as in \cite{leung2017nonlinear,manchester2017control}.
The simulation was run with a timestep of $0.01$s, and a geodesic was calculated at each timestep to compute the control input.
We used $x_0=[9,9,9]^\top$ and $x_d=[0,0,0]^\top$, $D=7$ and $N=11$ for the simulation.
The time to compute the geodesic for the two methods is shown in \cref{fig:CCM_times}.
The PDE pseudospectral was almost 3x faster overall, especially when $x$ was further from $x_d$.
Additionally, \cref{fig:Energies_CCM} shows that the evolution of the Riemannian energy is identical for both methods, indicating that the accuracy of each method is comparable.
To summarize, \cref{fig:CCM_results} illustrates that the PDE pseudospectral method requires significantly less computation time compared to gradient descent, while maintaining comparable accuracy in terms of geodesic length.  

\section{Conclusion}
\label{sec:conclusion}
In this work, we presented an analysis of the geometric heat flow equation and a PDE pseudospectral solver that was demonstrated to be faster than computing geodesics using numerical optimization.
The key technical result is that a curve propagated through the geometric heat flow equation will exponentially converge in $L_2$ to a geodesic if the curvature of the Riemannian manifold does not exceed a positive bound and will otherwise asymptotically convergence.
Our theoretical analysis was validated through numerical tests that demonstrated exponential convergence and showed that the rate of convergence varies with the curvature.
Additional simulations were conducted to show that the proposed method is ideal for real-time feedback control using a contraction-based controller.
Future work includes investigating other PDE solvers for improved performance in different scenarios (e.g., egg box) and extending this method to Finsler manifolds.

\section*{Appendix}
\label{sec:appendix}

\begin{proof}[Proof of \cref{lemma:poincare}]
    Without loss of generality, let $w : [0,1] \times  \mathbb{R}_+ \rightarrow \mathbb{R} : (s,\tau) \mapsto w(s,\tau)$ be a component of a smooth vector field $W$.
    By the fundamental theorem of calculus, 
    \begin{equation*}
        w(s,\tau) - w(0,\tau) = \int_0^s \partial_\sigma w \, d \sigma \Rightarrow w(s,\tau) = \int_0^s \partial_\sigma w \, d \sigma.
    \end{equation*}
    Taking the absolute value and applying Cauchy-Schwarz, 
    \begin{equation*}
        \begin{aligned}
            |w(s,\tau)| & = \left| \int_0^s \partial_\sigma w \, d \sigma \right| \\ 
            &\leq  \left( \int_0^s d \sigma \right)^{1/2} \left(\int_0^s (\partial_\sigma w)^2 d \sigma\right)^{1/2}. 
        \end{aligned}
    \end{equation*}
    Squaring both sides and evaluating the second integral,
    \begin{equation*}
        w(s,\tau)^2 \leq s \int_0^s (\partial_\sigma w)^2 d \sigma \leq s \int_0^1 (\partial_\sigma w)^2 d \sigma,
    \end{equation*}
    where the second inequality arises because the integral of a positive semidefinite integrand is monotonic in the upper limit. 
     A second bound on $w(s,\tau)$ can be derived by applying the fundamental theorem but in the reverse direction, namely
    \begin{equation*}
        \begin{aligned}
            w(1,\tau) - w(s,\tau) = \int_s^1 \partial_\sigma w \, d \sigma \Rightarrow  w(s,\tau) = \int_1^s \partial_\sigma w \, d \sigma.
        \end{aligned}
    \end{equation*}
    Following identical steps as before yields
    \begin{equation*}
         w(s,\tau)^2 \leq (1-s) \int_0^1 (\partial_\sigma w)^2 d \sigma,
    \end{equation*}
    so we have the bound
    \begin{equation*}
        w(s,\tau)^2 \leq \min\{s,1-s\} \int_0^1 (\partial_\sigma w)^2 d \sigma. 
    \end{equation*}    
    Integrating both sides with respect to $s$,
    \begin{equation*}
        \int_0^1 w^2 \, ds \leq \int_0^1 \left( \int_0^1 (\partial_\sigma w)^2 d \sigma \right) \min\{s,1-s\} \, ds.
    \end{equation*}
    Noting that the inner integral is independent of $s$, then
    \begin{equation*}
        \begin{aligned}
            \int_0^1 w^2 \, ds & \leq \left( \int_0^1 (\partial_\sigma w)^2 d \sigma \right) \int_0^1 \min\{s,1-s\} \, ds \\
             & = \frac{1}{4} \int_0^1 (\partial_\sigma w)^2 d \sigma \\
             & = \frac{1}{4} \int_0^1 (\partial_s w)^2 d s,
        \end{aligned}
    \end{equation*}
    where we have replaced the dummy variable $\sigma$ with $s$.
    Since $w$ and $\partial_s w$ are arbitrary elements of $W$ and $\partial_s W$, the above inequality must hold for all components of $W$ and $\partial_s W$.
    Therefore, since $\langle W,W\rangle = \sum_i w^2_i$ and $\langle \partial_s W, \partial_s W\rangle = \sum_i (\partial_s w_i)^2$, we arrive at the desired inequality. Note that the $1/4$ factor is known as the Poincar\'e constant \cite{krstic2008boundary,hardy1952inequalities}.
\end{proof}

\subsection{Surface Metric Tensors and Parameters}
Parameters for the example surfaces and initial conditions are included here.
\vspace{-.2em}
\begin{table}[h!]
\caption{Sphere metric and boundary conditions.}
\centering
\begin{tabular}{|c|c|}
\hline
\multicolumn{2}{|c|}{\textbf{Sphere} ($R=1$)} \\
\hline
Coordinates & $\theta,\ \phi$ \\
\hline
Metric Tensor &  \rule[-2.3ex]{0pt}{5.7ex}$\begin{pmatrix} R^2 & 0 \\ 0 & R^2\sin^2\theta \end{pmatrix}$ \\
\hline
Start & $\pi/8,\ \pi/8$ \\
\hline
End & $3\pi/4,\ 2\pi/3$ \\
\hline
\end{tabular}
\label{tab:sphere}
\end{table}
\vspace{-.2em}
\begin{table}[h!]
\caption{Torus metric and boundary conditions.}
\centering
\begin{tabular}{|c|c|}
\hline
\multicolumn{2}{|c|}{\textbf{Torus} ($a=5,\ b=3$)} \\
\hline
Coordinates & $\theta,\ \phi$ \\
\hline
Metric Tensor &  \rule[-2.3ex]{0pt}{5.7ex} $\begin{pmatrix} (a+b\cos\phi)^2 & 0 \\ 0 & b^2 \end{pmatrix}$ \\
\hline
Start & $0,\ 0$ \\
\hline
End & $5\pi/4,\ 5\pi/4$ \\
\hline
\end{tabular}
\label{tab:torus}
\end{table}

\vspace{-.2em}
\begin{table}[h!]
\caption{Egg box surface and boundary conditions.}
\centering
\begin{tabular}{|c|c|}
\hline
\multicolumn{2}{|c|}{\textbf{Egg box}} \\
\hline
Coordinates & $x,\ y$ \\
\hline
Surface & $f(x,y)=x^2-y^2+2\sin(5x)\cos(5y)$ \\
\hline
Metric Tensor & \rule[-2.3ex]{0pt}{5.7ex} $\begin{pmatrix}1+f_x^2 & f_xf_y \\ f_xf_y & 1+f_y^2\end{pmatrix}$ \\
\hline
Start & $-1.5,\ -1.5$ \\
\hline
End & $1.5,\ 1.5$ \\
\hline
\end{tabular}
\label{tab:eggshell}
\end{table}

\bibliographystyle{IEEEtran}

\bibliography{references}

\end{document}